\begin{document}
%
\title{Unitary Query for the $M \times L \times N$ MIMO Backscatter RFID Channel}
%
%
%

\author{Chen~He,~
        Z.~Jane~Wang,~\IEEEmembership{Senior Member,~IEEE},~
        and Victor~C.M.~Leung, ~\IEEEmembership{Fellow,~IEEE}
                \thanks{The authors are with \IEEEauthorblockA{Department of Electrical and Computer Engineering, University of British Columbia, Vancouver, BC, V6T 1Z4 Canada. Emails: \{chenh, zjanew, vleung\}@ece.ubc.ca.} }
}

%
%

\markboth{}%
{Shell \MakeLowercase{\textit{et al.}}: Bare Demo of IEEEtran.cls for Journals}
%



\maketitle

\begin{abstract}
A MIMO backscatter RFID system consists of three operational ends: the query end (with $M$ reader transmitting antennas), the tag end (with $L$ tag antennas) and the receiving end (with $N$ reader receiving antennas). Such an $M \times L \times N$ setting in RFID can bring spatial diversity and has been studied for STC at the tag end.
Current understanding of the query end is that it is only an energy provider for the tag and query signal designs cannot improve the performance. However, we propose a novel \textit{unitary query} scheme, which creates time diversity \emph{within channel coherent time} and can yield \emph{significant} performance improvements. To overcome the difficulty of evaluating the performance when the unitary query is employed at the query end and STC is employed at the tag end, we derive a new measure based on the ranks of certain carefully constructed matrices. The measure implies that the unitary query has superior performance. Simulations show that the unitary query can bring $5-10$ dB gain in mid SNR regimes. In addition, the unitary query can also improve the performance of single-antenna tags significantly, allowing employing low complex and small-size single-antenna tags for high performance. This improvement is unachievable for single-antenna tags when the conventional uniform query is employed.
\end{abstract}



\begin{IEEEkeywords}
RFID, backscatter channel, MIMO, query method, space-time coding
\end{IEEEkeywords}
%
\IEEEpeerreviewmaketitle

\newtheorem{Theorem}{Theorem}
\newtheorem{Lemma}{Lemma}
\newtheorem{Proposition}{Proposition}
\section{Introduction}
Radio-frequency identification (RFID) is a wireless communication technology that allows an
object to be identified automatically and does not require LOS transmission \cite{Want2006}.
It is one important infrastructure of the internet of things, and adds significant values in many applications, such as inventory systems, product tracking, access control, libraries, museums, sports and social networks.
An RFID system includes three major components: RFID readers  (also known as interrogators), RFID tags (also known as labels), and RFID software or RFID middleware \cite{Klaus2003}. An RFID tag is a small electronic device that has a unique ID. It transmits data over the air in response to interrogation by an RFID reader.
Depending on power supplying methods, the RFID tags can be categorized into passive, active, and semi-active tags.  An active tag utilizes its internal battery to continuously power its RF communication circuitry, while a passive RFID tag has no internal power supply and relies on RF energy transferred from the reader to the tag. A semi-passive tag is powered by both its internal battery and RF energy from the reader.

Most RFID tags deployed are based on  \emph{backscatter modulation}, which does not require the modulated signal to be amplified and retransmitted, and thus the RF tags can be made extraordinary small and inexpensive. By the principal of
backscatter modulation, the RF tag simply scatters a portion of the
incident continuous wave signals from the reader transmitter
back to the reader receiver using load modulation \cite{Kim2003}. Such signals sent from the reader transmitters are known as \emph{query signals}. The backscatter RFID can operate at ultra-high frequency
(UHF) at $860-960$ MHz, $2.45$ GHz and $5.8$ GHz with the
operating range of the order of $10$ meters.
Measurements in \cite{Kim2003} and \cite{Griffin2009} showed that the backscatter RFID channel can be modeled as a two-way channel with a forward sub-channel and a backscattering sub-channel, and both sub-channels can be modeled as certain fading, depending on the radio propagation environment.
This two-way channel fades deeper than the conventional one-way channel and degrades the data transmission reliability and reading range, which are two important performance metrics in RFID systems.

Many efforts have been made on improving the performance of the backscatter RFID \cite{Ingram2001, Griffin2008, Griffin2009, Langwieser2010, DoYunKim2010, Denicke2012, Trotter2012, He 2011, He2012, He2013, Boyer2013, Karthaus2003, Nikitin2005, Fuschini2008, Xi2009, Bletsas2010, Chakraborty2011, Thomas2012, Kimionis2012, Boyer2012, Arnitz2013, Griffin2009B}.
Among those efforts, using multiple antennas for both tags and readers appears to be one practical and promising way. Such multiple-input multiple-output (MIMO) systems had a great success in conventional wireless communications \cite{Tarokh1998, Tarokh1999, Sandhu2000, Zheng2003, Tse2005} and were also investigated and found promising in RFID \cite{Ingram2001, Griffin2008, Griffin2009, Langwieser2010, DoYunKim2010, Denicke2012, Trotter2012}. A general MIMO backscatter RFID channel has $M$ query antennas on the reader, $L$ tag antennas on the tag and $N$ receiving antennas on the reader, as shown in Fig. \ref{Fig: MIMORFIDBigPic}. This MIMO setting can create spatial diversity and thus can improve the bit error rate (BER) performance and reading range of backscatter RFID. In \cite{Ingram2001}, simulations showed that with the MIMO setting, the range of backscatter RFID can be extended by a
factor of four or more in the pure diversity configuration and
that capacity can be increased by a factor of ten
or more in the spatial multiplexing configuration. In \cite{Griffin2008}, it was shown that backscatter diversity can mitigate the fading by changing the shape of the fading distribution which, along with
the increased RF tag scattering aperture, can result in a $10$ dB
gain at a BER of $10^{-4}$ and thus can lead to increased backscatter radio communication reliability and range (e.g., up to a $78$ percent range increase), which is consistent with a later result in \cite{DoYunKim2010}.
Except diversity gain, in \cite{Trotter2012} it was shown that additional antenna gains can be realized to mitigate or overcome extra path loss by using multiple antennas for narrowband signals centered at $5.8$ GHz.
The radio measurements of backscatter RFID with MIMO settings have also been investigated: in \cite{Griffin2009} the measurement was conducted at $5.8$ GHz, and experiment showed that diversity gains
are available for multiple-antenna RF tags and the results matched
well with the gains predicted using the analytic fading distributions
derived in \cite{Griffin2008}. In \cite{Denicke2012}, a method for the
determination of the channel coefficients between all antennas was
presented.
Another interesting research was conducted in \cite{Langwieser2010}, where researchers described a developed analog frontend for an RFID rapid prototyping system which allows for various real-time experiments to investigate MIMO techniques.

\subsection{Related Work}
The spatial diversity brought by MIMO settings for the backscatter RFID has been analytically studied recently.
With the quasi-static fading assumption, it was shown that for the $M \times L \times N$ backscatter RFID channel, the diversity order achieves $\min(N,L)$ for the uncoded case \cite{He2012}, and the diversity order achieves $L$ for the orthogonal space-time coded case \cite{He2013} \cite{Boyer2013}. Moreover, the diversity order cannot be greater than $L$ \cite{Boyer2013}.
%
All the above studies that use MIMO settings to exploit the diversity gain for the backscatter RFID are based on the uniform query, for which the query antennas send the same signal over all symbol times. Since \cite{Ingram2001, Griffin2008}, where the $M \times L \times N$ backscatter RFID channel was formulated, there is no other query signaling methods have been considered. This is because the previous understanding is that, since spatial diversity can only be obtained by duplicating the information and transmitting it over multiple branches, while the query end is not the information source, designs of the query signal cannot bring spatial diversity in quasi-static channels. In this paper, however, we show that in quasi-static channels, the query signals can create \emph{time diversity} via multiple query antennas and thus improve the performance for the backscatter RFID \emph{significantly}. Our result does not follow the achievable diversity order of the $M \times L \times N$ channel reported in previous findings for the uniform query. We also analytically study the performance of the proposed unitary query.  Due to the specific signaling and fading structure of the backscatter RFID channel, the pairwise error probability (PEP) and even the diversity order are not trackable for the unitary query, we thus provide a new measure which can compare the PEP performance of the proposed unitary query with that of the conventional uniform query. We summarize the major contribution of this paper in the following.

\subsection{Contributions}
The major contributions of this work include:
\begin{itemize}
  \item We propose \emph{unitary query} which is performed at the reader query end. The proposed unitary query can create time diversity within each channel coherent interval, and thus can improve the BER performance significantly. Compared with case that the conventional uniform query is employed at the query end and space-time coding is employed at the tag end, the case that the proposed unitary query is employed at the query end and the same space-time code is employed at the tag end can yield a much better performance, which can be as large as $5-10$ dB in mid SNR regimes.
  \item We analytically study the performance of the proposed unitary query.  Due to the specific signaling and fading structure of the $M \times L \times N$ backscatter channel, the PEP and even the diversity order (i.e. the conventional measure) are not trackable for the unitary query, we thus derived a new measure which can compare the PEP performance of the unitary query with that of the uniform query. The derived measure in this paper can be used as criteria for designing the MIMO backscatter RFID system.
  \item We present analysis to show that the proposed \emph{unitary query} has a very practical meaning: for conventional uniform query, to improve the performance \emph{significantly}, equipping multiple antennas on the tag is a must. By contrast, for the proposed unitary query, to improve the performance significantly, equipping multiple antennas on the tag is \emph{not a must}. The proposed unitary can transfer the complexity requirements from the tag to the reader, and allows single-antenna tag to have high performance.
\end{itemize}

This paper is organized as follows: We give a brief introduction of the MIMO backscatter RFID channel in Section \ref{Sec: Channel_Model}. We propose the unitary query in Section \ref{Sec: Unitary_Querry}, and derive a new measure for the performance of the unitary query. In Section \ref{Sec: Examples and Simulations}, we study a few examples and conduct the corresponding simulations. Finally we summarize our work in Section \ref{Sec: Conclusion}.

\emph{Notations}: In this paper,  $\mathbb{Q}(\cdot)$ means the $Q$ function; $\mathbb{P}(\cdot)$, $\mathbb{E}_X(\cdot)$, $X|Y$, $\|\cdot\|_F$, $rank(\cdot)$, $\|\cdot\|$, $(\cdot)^T$, and $(\cdot)^H$ denote the probability of an event, the expectation over the density of $X$, the conditional random variable of $X$ given $Y$, the Frobenius norm of a matrix, the rank of a matrix, the magnitude of a complex number, the transpose, and the conjugate transpose, respectively; $X \sim Y$ means that $X$ is identically distributed with $Y$.

%

\section{The $M \times L \times N$ MIMO Backscatter RFID Channel}\label{Sec: Channel_Model}
The backscatter RFID has three operational ends: the reader query end (i.e., the set of reader transmitting antennas), the tag end (i.e., the set of tag antennas), and the reader receiver end (i.e., the set of reader receiving antennas). These three ends can be mathematically modeled by an $M\times L\times N$ dyadic backscatter channel which consists of $M$ reader transmitter antennas, $L$ RF tag antennas, and $N$ reader receiver antennas \cite{Ingram2001, Griffin2008, DoYunKim2010, He2013, Boyer2013}, as shown in Fig. \ref{Fig: MIMORFIDBigPic}.
In a quasi-static wireless channel, this MIMO structure can be summarized by using the following matrices: More specifically,
\begin{align}\label{Eq: MatrixQ}
\mathbf{Q}=\left(
           \begin{array}{ccc}
             q_{1,1}    & \cdots        &  q_{1,M}   \\
             \vdots     & \ddots        & \vdots     \\
             q_{T,1}    & \cdots        & q_{T,M}    \\
           \end{array}
         \right)
\end{align}
is the query matrix (with size $T \times M$), representing the query signals sent from the the $M$ reader query (transmitting) antennas to the tag over $T$ time slots (i.e. $T$ symbol times);
\begin{align}\label{Eq: MatrixH}
\mathbf{H}=\left(
           \begin{array}{ccc}
             h_{1,1}    & \cdots        & h_{1,L}   \\
             \vdots     & \ddots        & \vdots     \\
             h_{M,1}    & \cdots        & h_{M,L}    \\
           \end{array}
         \right)
\end{align}
is the channel gain matrix (with size $M \times L$) from the reader transmitter to the tag, representing the forward sub-channels;
\begin{align}\label{Eq: MatrixC}
\mathbf{C}=\left(
           \begin{array}{ccc}
             c_{1,1}    & \cdots        & c_{1,L}   \\
             \vdots     & \ddots        & \vdots     \\
             c_{T,1}    & \cdots        & c_{T,L}    \\
           \end{array}
         \right)
\end{align}
is the coding matrix (with size $T \times L$), where the tag transmits space-time coded or uncoded symbols from its $L$ antennas over $T$ time slots; and
\begin{align}\label{Eq: MatrixG}
\mathbf{G}=\left(
           \begin{array}{ccc}
             g_{1,1}    & \cdots  & g_{1,N}    \\
             \vdots     & \ddots  &  \vdots     \\
             g_{L,1}    & \cdots  & g_{L,N} \\
           \end{array}
         \right),
\end{align}
is the channel gain matrix (with size $L \times N$) from the tag to the reader receiver, representing the backscattering sub-channels.
Finally the received signals at $N$ reader receiving antennas over $T$ time slots are represented by matrix $\mathbf{R}$ with size $T \times N$:
\begin{align}\label{Eq: RFID_Channel_Model}
\mathbf{R}=((\mathbf{Q}\mathbf{H})\circ\mathbf{C})\mathbf{G}+\mathbf{W}
\end{align}
where $\circ$ is the Hadamard product, and the matrix $\mathbf{W}$ is with the same size as that of $\mathbf{R}$, representing the noise at the $N$ reader receiving antennas over $T$ time slots. Typically, both $\mathbf{H}$ and $\mathbf{G}$ are modeled as full rank matrices with i.i.d complex Gaussian entries, and $\mathbf{W}$ is AWGN.

The signal-channel structure of the $M \times L \times N$ RFID channel is radically different from conventional wireless channels, and can be characterized as a \emph{query-fading-coding-fading} structure.
Compared with the conventional one-way wireless channel, this signal-channel structure not only has one more layer of fading $\mathbf{H}$ but also one more signaling mechanism represented by the query matrix $\mathbf{Q}$. In addition, the backscatter principle makes the received signals not a simple series of linear transformations of transmitted signals and channel gains, but actually there involves a non-linear structure in the backscatter RFID channel, which is the result from the Hadamard product in \eqref{Eq: RFID_Channel_Model}. Because it has such a special signaling-channel structure, the backscatter RFID channel behaves completely different from that of the one-way channel \cite{He2013, Boyer2013}. It is worth mentioning here that the keyhole channel also has two layers of fading, however, the keyhole channel and the backscatter RFID channel are essentially different. The keyhole channel is still a one-way channel, as the signals sent out will not be reflected back. In addition, the keyhole channel has only two operational ends (the transmitter and the receiver), while the backscatter channel has three operational ends and the information to be transmitted is at the middle end (the tag end). The essential differences of the two channels have been discussed in \cite{Boyer2013, He2013}, especially there is a detailed discussion in \cite{Boyer2013}. In general, the $M \times L \times N$ backscatter RFID channel is more complicated than the keyhole channel.

\begin{figure}
\centering
  \includegraphics[scale=0.6]{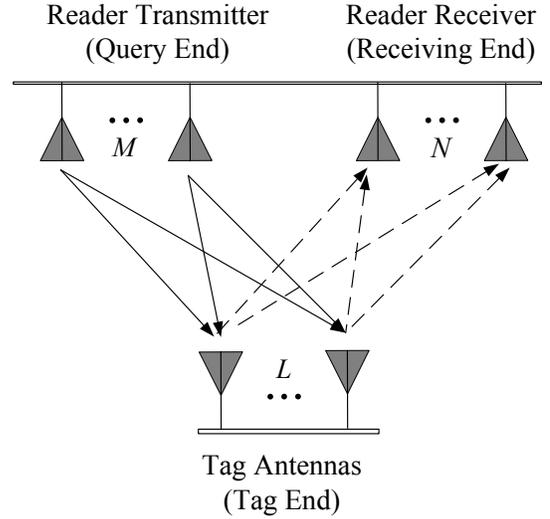}\\
  \caption{The $M \times L \times N$ backscatter RFID channel. The channel consists three operational ends: the query end (with $M$ query antennas), the tag end (with $L$ tag antennas) and the receiving end (with $N$ receiving antennas). The query antennas transmit  unmodulated (query) signals to the RF tag and the RF tag scatters a modulated signal back to the reader.}\label{Fig: MIMORFIDBigPic}
\end{figure}

\section{Unitary Query for Backscatter RFID}\label{Sec: Unitary_Querry}
Recall that in the backscatter RFID channel, there are three operational ends: the query end, the tag end, and the receiving end. In the previous literature, the understanding of the query end was that the design of query signals can not improve the BER performance. This is based on the following explanation: In the quasi-static channel, since spatial diversity can only be created by duplicating the information and transmitting it from multiple branches, while the query end is not the information source, it plays only as a role of energy provider for the tag when the tag is transmitting its information. However, in this section we reconsider the previous understanding and propose the \textit{unitary query}. We show that the proposed unitary query can improve the performance of space-time code (STC), and this improvement can be significant. Unitary query improves the BER performance by transforming the spatial advantage to the time diversity via multiple antennas at the reader query end.

In this paper, it is assumed that the fading channel is quasi-static, i.e., the channel is constant over a long period of time and changes in an independent manner. This quasi-static assumption is valid as long as the transmitter and the receiver is not moving in high velocity, and it is one of the major assumptions for many wireless communication systems including many RFID systems \cite{He2013} \cite{Boyer2013}. Since the channel does not change within the coherent time, only spatial diversity can be provided within the coherent time for the conventional one-way channel. In the backscatter RFID channel, when the conventional uniform query is used (i.e.,
all the $M$ query antennas send out the same signals over $T$ time slots), the query matrix is given by
\begin{align}
 \mathbf{Q}_{\textmd{uniform}}
 =\frac{1}{\sqrt{M}}\left(
           \begin{array}{ccc}
             1    & \cdots        & 1   \\
             \vdots     & \ddots        & \vdots     \\
             1    & \cdots        & 1    \\
           \end{array}
         \right).
\end{align}
The uniform query is used as the query method for all previous studies of the $M \times N \times L$ backscatter channel, and no further investigation has been made on the query signal design since the $M \times N \times L$ backscatter channel has been formulated in \cite{Ingram2001, Griffin2008}.
The reason that no other query signal design method has been considered probably from the understanding that the spatial diversity from the transmitter can only be made when transmitting duplicated information from different antennas, and since the query signals do not carry information, the spatial diversity can only be made from the tag antennas and the reader receiving antennas. 

However, in general, query signals can be designed to follow any arbitrary $\mathbf{Q}$. In this paper, we propose the so-called unitary query, which satisfies the unitary condition:
\begin{align}
\mathbf{Q}_{\textmd{unitary}}\mathbf{Q}_{\textmd{unitary}}^H=\mathbf{I}.
\end{align}
Note that to satisfy the unitary condition we must have $T=M$, while, as long as there are at least  $M$ symbol times during the transmission period, we can always cast the query signals into blocks each of which has $T=M$ symbol times, and obtain the unitary query.
Since the above query matrix is unitary and the entries of $\mathbf{H}$ are i.i.d complex Gaussian, we have
\begin{align}\label{Eq: Matrix_X}
 \mathbf{Q_{\textmd{unitary}}}\mathbf{H} \sim \mathbf{X}
 =\left(
           \begin{array}{ccc}
             x_{1,1}    & \cdots        & x_{1,L}   \\
             \vdots     & \ddots        & \vdots     \\
             x_{T,1}    & \cdots        & x_{T,L}    \\
           \end{array}
         \right).
\end{align}
The resulting matrix $\mathbf{X}$ (with size $T\times L$) has i.i.d complex Gaussian entries $x_{t,l}$'s, so the unitary query actually transforms the forward channel $\mathbf{H}$, which is invariant over the $T$ time slots, into a channel $\mathbf{X}$ which varies over the $T$ time slots. We will show later that this variation over the $T$ time slots is the fundamental reason that the unitary query can bring additional time diversity and significant performance improvement for some STCs in the backscatter RFID channel.
When compared with that of the uniform query
\begin{align}\label{Eq: Matrix_Y}
 \mathbf{Q_{\textmd{uniform}}}\mathbf{H} \sim \mathbf{Y}
 =\left(
           \begin{array}{ccc}
             y_{1}    & \cdots        & y_{L}   \\
             \vdots     & \ddots        & \vdots     \\
             y_{1}    & \cdots        & y_{L}    \\
           \end{array}
         \right),
\end{align}
where $y_{l}$'s are i.i.d complex Gaussian.
Clearly the resulting matrix $\mathbf{Y}$ (also with size $T \times L$) has identical rows. Thus the uniform query transforms the full rank matrix $\mathbf{H}$ into a rank-one matrix, while the unitary query transforms $\mathbf{H}$ into another full rank matrix. $\mathbf{X}$ varies both temporally and spatially, while  $\mathbf{Y}$ only varies spatially.
In the following sub-section, we give a brief interpretation of the diversity of the proposed unitary query.

\subsection{Interpretation for Unitary Query: Time Diversity within Coherent Interval}\label{Sec: Interpretation}
In the quasi-static channel, where the channel is highly correlated across consecutive symbols, no time diversity can be provided within one coherent time interval for the one-way channel, and time diversity can only be provided by interleaving symbols in different coherent time intervals. This also applies to the backscatter RFID channel when the conventional uniform query is employed.
The unitary query, however, utilizes the multiple query antennas, to create time diversity within channel coherent time.  Fig. \ref{Fig: TimeDiversity} shows that, with the conventional uniform query, the backscatter RFID channel still behaves like a quasi-static channel: the channel changes every $T$ symbol times; by contrast, when the unitary query is employed, the channel changes every $1$ symbol time.

An alternative interpretation based on geometry is shown in Fig. \ref{Fig: Geometry}. We consider the codewords $(c_{1,1}, c_{2,1}, \cdots, c_{T,1})$, which can be viewed as a point in a $T$-dimensional space. We can see that when the uniform query is applied, possible locations of the point $(c_{1,l},c_{2,l}, \cdots, c_{T,1})$ can only be mapped to the points on a straight line, which is only $1$-dimensional. However, when the unitary query is applied, possible locations of the point $(c_{1,l},c_{2,l}, \cdots, c_{T,1})$ can be mapped to any points in the entire $T$-dimensional space. From Fig. \ref{Fig: Geometry}, it is clear that this kind of full-dimensional spreading out of possible locations of the codewords by the unitary query may yield significant performance improvements.

\begin{figure}
\centering
  \includegraphics[scale=0.66]{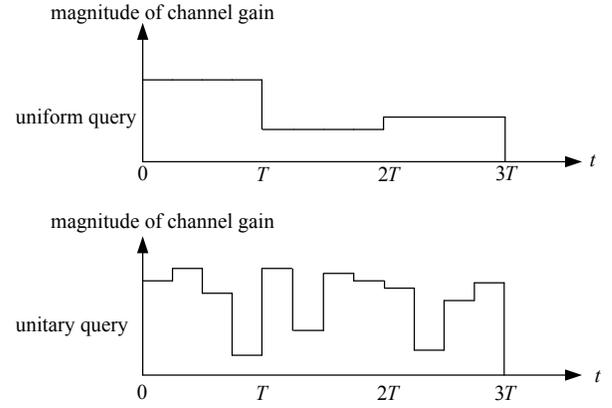}\\
  \caption{The proposed unitary query can create time diversity within channel coherent time for the $M \times L \times N$ backscatter channel. By employing the unitary query, the channel is independent for each symbol time in ideal situations, and thus the risk of having all codewords in the entire coherent time being wiped out decreases. This type of time diversity within coherent time does not exist in the conventional one-way channel.}\label{Fig: TimeDiversity}
\end{figure}

\begin{figure}
\centering
  \includegraphics[scale=0.66]{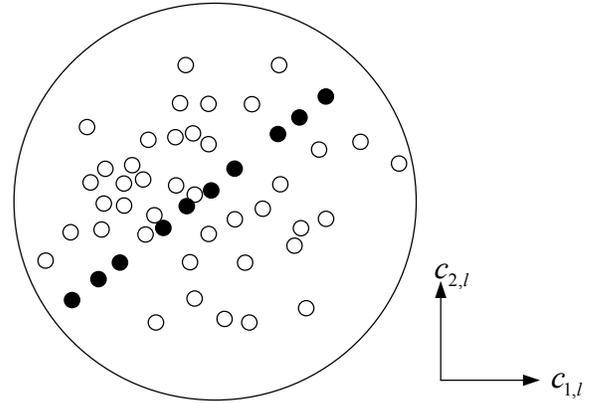}\\
  \caption{The proposed unitary query can map possible locations of codewords (as a point in $T$ dimensional space) within a coherent time interval into any points in a $T$-dimensional space (small white circles), while the conventional uniform query can only map possible locations of codewords within a coherent interval into a $1$-dimensional space (small black circles).}\label{Fig: Geometry}
\end{figure}

\subsection{New Performance Measure for the $M \times N \times L$ Channel}\label{Sec: New_Measure}
Now we need to study the performance of the unitary query. In previous literature, the performance when the $M \times N \times L$ channel employs the uniform query was investigated in \cite{Griffin2008, He2012, He2013, Boyer2013}, and it was shown that, the analysis is very difficult even with the conventional uniform query. With the proposed unitary query being employed, the analysis will be more difficulty, the diversity order is not trackable. We thus derive a new measure other than the conventional diversity order for the performance analysis. The new measure is based on the ranks of certain carefully constructed random matrices.

When the $M \times N \times L$ channel employs the unitary query at the query end, and employs space-time coding at the tag end, it has an equivalent channel model as
\begin{align}\label{Eq: RFID_Channel_Model_Uniform_Query}
\mathbf{R}=(\mathbf{X}\circ\mathbf{C})\mathbf{G}+\mathbf{W},
\end{align}
where $\mathbf{X}$ is given in \eqref{Eq: Matrix_X}.
When the $M \times N \times L$ channel applies uniform query at the query end, and space-time coding at the tag end, it has an equivalent channel model as
\begin{align}\label{Eq: RFID_Channel_Model_Uniform_Query}
\mathbf{R}=(\mathbf{Y}\circ\mathbf{C})\mathbf{G}+\mathbf{W},
\end{align}
where where $\mathbf{Y}$ is given in \eqref{Eq: Matrix_Y}.
Now we define the codewords difference matrix for codewords matrices $\mathbf{C}$ and $\mathbf{C}'$ as
\begin{align}
\Delta=\mathbf{C}-\mathbf{C}'=\left(
           \begin{array}{ccc}
             \delta_{1,1}    & \cdots        & \delta_{1,T}   \\
             \vdots     & \ddots        & \vdots     \\
             \delta_{L,1}    & \cdots        & \delta_{L,T}    \\
           \end{array}
         \right).
\end{align}
The PEP is the probability that the receiver decide erroneously in favor of the codewords matrix $\mathbf{C}'$ a when the $\mathbf{C}$ is actually transmitted, for unitary query, the PEP is can be evaluated as
\begin{align} \label{Eq: Unitary_Query_PEP}
 \textmd{PEP}_{\textmd{X}}(\bar{\gamma})
 =\mathbb{E}_{\mathbf{H},\mathbf{G}}\left(\mathbb{Q}\left(\sqrt{\bar{\gamma}Z_X/2}\right)\right),
\end{align}
where
\begin{align}\label{Eq: CW_Distance_Unitary_Querry}
Z_X&=\|(\mathbf{X}\circ\mathbf{C})\mathbf{G}-(\mathbf{X}\circ\mathbf{C'})\mathbf{G}\|^2_F \nonumber\\
   &=\|(\mathbf{X}\circ\Delta)\mathbf{G}\|^2_F,
\end{align}
is the random variable which represents the squared distance between the codewords matrices $\mathbf{C}$ and $\mathbf{C}'$ when unitary query is employed and the tag uses space-time coding.
Similarly, for the uniform query, the PEP is given by
\begin{align} \label{Eq: Uniform_Query_PEP}
 \textmd{PEP}_{\textmd{Y}}(\bar{\gamma})
 =\mathbb{E}_{\mathbf{H},\mathbf{G}}\left(\mathbb{Q}\left(\sqrt{\bar{\gamma}Z_Y/2}\right)\right),
\end{align}
where
\begin{align}\label{Eq: CW_Distance_Uniform_Querry}
Z_Y&=\|(\mathbf{Y}\circ\mathbf{C})\mathbf{G}-(\mathbf{Y}\circ\mathbf{C'})\mathbf{G}\|^2_F \nonumber\\
   &=\|(\mathbf{Y}\circ\Delta)\mathbf{G}\|^2_F
\end{align}
is the random variable which represents the squared distance between the codewords matrices $\mathbf{C}$ and $\mathbf{C}'$ when uniform query is employed and the tag uses space-time coding. $\bar{\gamma}$ in the above equations is the averaged signal-to-noise ratio (SNR).

Quite different from that of the one-way channel, directly evaluating the PEPs in \eqref{Eq: Unitary_Query_PEP} and \eqref{Eq: Uniform_Query_PEP} is not feasible  because the distributions of $Z_X$ and $Z_Y$ are not trackable when general space-time code is considered at the tag end.
Even for the case when the uniform query is employed (corresponds to the distribution of $Z_Y$), the asymptotic PEP can only be obtained for two special coding cases: the orthogonal space-time code\cite{He2013, Boyer2013} and the uncoded case \cite{He2012}. When the proposed unitary query is employed (corresponds to the distribution of $Z_X$), evaluating the PEP will be even harder. In this paper, we reconsider the evaluation of PEP and provide a new measure for the PEP performance for the $M \times L \times N$ channel, to overcome the above difficulties. This new measure can provide a deep understanding of the performance of the channel, and can be used to compare the performances between the unitary query and the uniform query.
Instead of considering the squared codewords  distance as a whole, we treat it in a time fashion.  When the unitary query is employed, at time $t$, the squared codewords distance is given by
\begin{align}
Z_X^t & =  \|(x_{t,1},\cdots,x_{t,L})\circ (\delta_{1,t},\cdots,\delta_{L,t})  \mathbf{G} \|_F^2 \nonumber\\
    & =  \|(x_{t,1},\cdots,x_{t,L})   \Delta_t  \mathbf{G} \|_F^2 \nonumber\\
\end{align}
where $\Delta_t$ is defined as
\begin{align}
\Delta_{t}\triangleq \left(
               \begin{array}{ccc}
                 \delta_{1,t} &        &              \\
                              & \ddots &              \\
                              &        & \delta_{L,t} \\
               \end{array}
             \right),
\end{align}
then over the $T$ time slots we have
\begin{align}
Z_X&=\sum_{t=1}^T \|(x_{t,1},\cdots,x_{t,L})   \Delta_t  \mathbf{G} \|_F^2 \nonumber\\
 &=\sum_{t=1}^T \|(x_{t,1},\cdots,x_{t,L})   \mathbf{E}_t \|_F^2,
\end{align}
where $\mathbf{E}_t$ is defined as
\begin{align}\label{Eq: MatrixEt}
\mathbf{E}_t \triangleq \Delta_t  \mathbf{G}.
\end{align}
We will see later that the ranks of the carefully constructed random matrices $\mathbf{E}_t$'s determine the performance for the unitary query.

When the uniform query is employed, the squared codewords distance at time $t$ is given by
\begin{align}
Z_Y^t & =  \|(y_{1},\cdots,y_{L})\circ (\delta_{1,t},\cdots,\delta_{L,t})  \mathbf{G} \|_F^2 \nonumber\\
    & =  \|(y_{1},\cdots,y_{L})   \Delta_t  \mathbf{G} \|_F^2 \nonumber\\
\end{align}
and over the $T$ time slots we have
\begin{align}
Z_Y & = \sum_{t=1}^T \|(y_{1},\cdots,y_{L})   \mathbf{E}_t \|_F^2 \nonumber\\
    &  = \|(y_{1},\cdots,y_{L})   (\mathbf{E}_1, \cdots, \mathbf{E}_T)\|_F^2.
\end{align}
Note that inside a $\|\cdot\|_F$ operator, the columns of the matrix $(\mathbf{E}_1, \cdots, \mathbf{E}_T)$ are interchangeable, therefore we have
\begin{align}
Z_Y &   = \|(y_{1},\cdots,y_{L})   (\mathbf{D}_1, \cdots, \mathbf{D}_N)\|_F^2,
\end{align}
where $\mathbf{D}_n$'s are defined as
\begin{align}
\mathbf{D}_n \triangleq \Delta  \mathbf{G}_n,
\end{align}
and $\mathbf{G}_n$'s are defined as
\begin{align}
\mathbf{G}_n\triangleq \left(
               \begin{array}{ccc}
                 g_{1,n} &        &              \\
                              & \ddots &              \\
                              &        & g_{L,n} \\
               \end{array}
             \right),
\end{align}
for $n=1,\cdots,N$.
Also, we will see later that the rank of the carefully constructed random matrix
\begin{align}\label{Eq: MatrixD}
\mathbf{D}\triangleq (\mathbf{D}_1, \cdots, \mathbf{D}_N)
\end{align}
determines the performance for the uniform query.

Now we give the following two Lemmas about the ranks of the random matrices $\mathbf{E}_t$'s and the rank of the matrix $\mathbf{D}$.
\begin{Lemma}\label{Lemma: Rank_of_Et}
For the matrices $\mathbf{E}_t$'s defined in \eqref{Eq: MatrixEt}, we have $\textmd{rank}(\mathbf{E}_t)=\min(N,L_t^*)$  with probability (w.p.) $1$ for all $t \in \{1, \cdots, T\}$, where $L_{t}^*$ is the number of non-zero elements of the $t$-th column of the codewords difference matrix $\Delta$.
\end{Lemma}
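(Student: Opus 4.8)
The plan is to exploit the diagonal structure of $\Delta_t$ to identify $\textmd{rank}(\mathbf{E}_t)$ with the rank of a suitable Gaussian submatrix of $\mathbf{G}$, and then to invoke the almost-sure full-rank property of a random matrix with continuously distributed entries. The whole argument is elementary linear algebra together with one standard measure-theoretic fact.

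First I would examine the row structure of $\mathbf{E}_t = \Delta_t \mathbf{G}$. Because $\Delta_t$ is the diagonal matrix with diagonal entries $\delta_{1,t}, \ldots, \delta_{L,t}$, left multiplication by $\Delta_t$ scales the $l$-th row of $\mathbf{G}$ by $\delta_{l,t}$. Thus the $l$-th row of $\mathbf{E}_t$ is the zero vector whenever $\delta_{l,t} = 0$, and a nonzero scalar multiple of the $l$-th row of $\mathbf{G}$ otherwise. By the definition of $L_t^*$ as the number of nonzero entries in the $t$-th column of $\Delta$, exactly $L_t^*$ of these diagonal entries are nonzero, so $\mathbf{E}_t$ has precisely $L_t^*$ nonzero rows.

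Next I would note that deleting the zero rows and rescaling each surviving row by the nonzero constant $\delta_{l,t}$ leaves the rank unchanged. This identifies $\textmd{rank}(\mathbf{E}_t)$ with $\textmd{rank}(\tilde{\mathbf{G}})$, where $\tilde{\mathbf{G}}$ is the $L_t^* \times N$ submatrix of $\mathbf{G}$ formed by the rows indexed by $\{l : \delta_{l,t} \neq 0\}$. Since the entries of $\mathbf{G}$ are i.i.d. complex Gaussian, so are the $L_t^* N$ entries of $\tilde{\mathbf{G}}$.

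Finally I would invoke the standard fact that a random matrix with i.i.d.\ absolutely continuous entries attains its maximal possible rank with probability one: the set of $L_t^* \times N$ matrices of rank strictly less than $\min(L_t^*, N)$ is the common zero locus of all $\min(L_t^*,N)$-by-$\min(L_t^*,N)$ minors, a proper algebraic subvariety of Lebesgue measure zero, while the complex Gaussian law is absolutely continuous with respect to Lebesgue measure. Hence $\textmd{rank}(\tilde{\mathbf{G}}) = \min(L_t^*, N)$ w.p.\ $1$, giving $\textmd{rank}(\mathbf{E}_t) = \min(N, L_t^*)$ w.p.\ $1$ for each fixed $t$; since $\{1, \ldots, T\}$ is finite, the statement holds simultaneously for all $t$ w.p.\ $1$. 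I expect the only nontrivial point to be this last measure-zero argument, and it is entirely standard; everything preceding it is routine.
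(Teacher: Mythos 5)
Your proposal is correct, and its skeleton matches the paper's: both exploit the diagonal structure of $\Delta_t$ to reduce $\textmd{rank}(\mathbf{E}_t)$ to the rank of a Gaussian matrix, then invoke an almost-sure full-rank property. You diverge in two respects, both to your advantage. First, where the paper proves full rank of $\mathbf{G}$ by an explicit probabilistic computation --- any nontrivial linear combination of its columns is a nondegenerate complex Gaussian vector, hence zero with probability $0$ --- you instead use the measure-zero argument: the matrices of deficient rank form the common zero locus of the maximal minors, a proper algebraic subvariety that is null for any absolutely continuous law. These are interchangeable justifications of the same standard fact. Second, and more substantively, the paper applies its full-rank conclusion to $\mathbf{G}$ itself and then asserts $\textmd{rank}(\mathbf{E}_t)=\min(L_t^*,N)$ ``because $\mathbf{G}$ is full rank,'' which is slightly loose: when $L>N$, a deterministic matrix of rank $N$ can have a submatrix of $L_t^*$ rows whose rank falls short of $\min(L_t^*,N)$, so the conclusion really rests on the full-rank property of the row-submatrix itself. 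You apply the almost-sure full-rank fact directly to the $L_t^*\times N$ submatrix $\tilde{\mathbf{G}}$ --- which is again i.i.d.\ Gaussian --- and thereby close this small gap; your explicit union bound over the finitely many $t$ to get the simultaneous statement is likewise a point of care the paper leaves implicit.
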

\begin{proof}[Proof of Lemma \ref{Lemma: Rank_of_Et}]
See the appendix.
\end{proof}


\begin{Lemma}\label{Lemma: Rank_of_D}
For the matrix $\mathbf{D}$ defined in \eqref{Eq: MatrixD}, we have $\textmd{rank}(\mathbf{D}) = \min(N \times \textmd{rank}(\Delta),L)$  with probability $1$, where $L$ is the number of non-zero columns of the codewords difference matrix $\Delta$.
\end{Lemma}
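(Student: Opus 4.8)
The plan is to translate the rank of $\mathbf{D}$ into a statement about a sum of diagonally-rescaled copies of one fixed subspace, and then to invoke the principle that a matrix whose entries are analytic functions of independent continuous randomness attains its generic (maximal) rank with probability one. First I would discard the zero rows of $\Delta$: if row $\ell$ of $\Delta$ vanishes then so does row $\ell$ of every $\mathbf{D}_n$, so such rows contribute nothing and may be deleted without changing the rank. After this reduction the ambient dimension is exactly $L$, the number of non-zero rows named in the statement, and the column space $V:=\mathrm{col}(\Delta)$ has \emph{full support} (no coordinate is identically zero on $V$), with $\dim V=\mathrm{rank}(\Delta)=:r$. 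Writing $\mathbf{G}_n=\mathrm{diag}(g_{1,n},\dots,g_{L,n})$, a direct inspection of the columns of $\mathbf{D}=(\mathbf{D}_1,\dots,\mathbf{D}_N)$ shows that its column space is $\sum_{n=1}^{N}\mathbf{G}_n V$, so that
\begin{align}
\mathrm{rank}(\mathbf{D})=\dim\Big(\textstyle\sum_{n=1}^{N}\mathbf{G}_n V\Big).
\end{align}
The two crude bounds $\dim(\sum_n\mathbf{G}_nV)\le\sum_n\dim(\mathbf{G}_nV)=Nr$ (each $\mathbf{G}_n$ is invertible w.p. $1$) and $\dim(\sum_n\mathbf{G}_nV)\le L$ together give the easy inequality $\mathrm{rank}(\mathbf{D})\le\min(Nr,L)$.

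For the reverse inequality I would use a measure-zero argument. Every $k\times k$ minor of $\mathbf{D}$ is a polynomial in the entries $g_{\ell,n}$, so for $k=\min(Nr,L)$ the event $\{\mathrm{rank}(\mathbf{D})<k\}$ is the common zero set of finitely many such polynomials; it is therefore either the whole parameter space or a proper algebraic subvariety of Lebesgue measure zero. Since the $g_{\ell,n}$ are jointly absolutely continuous, it suffices to exhibit a \emph{single} choice of the diagonals for which $\mathrm{rank}(\mathbf{D})=\min(Nr,L)$; the claimed equality then holds with probability one.

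Producing one such realization is the heart of the argument. The natural attempt is the Vandermonde choice $\mathbf{G}_n=\mathbf{\Lambda}^{\,n-1}$ with $\mathbf{\Lambda}=\mathrm{diag}(\lambda_1,\dots,\lambda_L)$ and distinct nodes $\lambda_\ell$, which turns the target into $\dim\big(V+\mathbf{\Lambda}V+\cdots+\mathbf{\Lambda}^{N-1}V\big)=\min(Nr,L)$; fixing a basis of $V$ expresses the relevant maximal minor as a confluent (generalized) Vandermonde determinant, which one shows is a non-identically-zero polynomial in the $\lambda_\ell$ using the distinctness of the nodes together with the full-support property of $V$. Equivalently, one can argue incrementally, adjoining the subspaces $\mathbf{G}_1V,\mathbf{G}_2V,\dots$ one at a time and checking that each addition raises the dimension by the maximal amount $\min(r,\,L-\dim S_{\text{current}})$ until the ceiling $\min(Nr,L)$ is reached.

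I expect this general-position step to be the main obstacle, and it is exactly where the full-support hypothesis is indispensable. Diagonal matrices preserve every coordinate subspace, so if $V$ were contained in a proper coordinate subspace—equivalently, if $\Delta$ had a zero row we had failed to delete—then every $\mathbf{G}_nV$ would remain inside that subspace and the dimensions could never add up to $\min(Nr,L)$; the reduction to non-zero rows is precisely what excludes this degeneracy. The genuine difficulty is that a \emph{single} diagonal $\mathbf{G}_n$ must simultaneously rotate all $r$ basis directions of $V$ into general position relative to the part already assembled, so the increments are coupled and cannot be chosen independently. Handling this coupling—through the non-vanishing of the generalized Vandermonde determinant, or through an inductive transversality argument that keeps the running subspace from collapsing onto a coordinate subspace—is the crux of the proof.
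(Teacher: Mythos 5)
Your overall framework is sound and, notably, more careful than the paper's own proof. The paper establishes general position by fixing the scalars $a_{i,j}$ \emph{first} and showing each fixed-coefficient dependence event has probability zero; since rank deficiency is a union of such events over uncountably many coefficient vectors (and the witnessing coefficients may depend on $\mathbf{G}$), that is a quantifier swap, and your Zariski-style reduction — minors are polynomials in the $g_{\ell,n}$, so the rank a.s. equals its maximum over the parameter space, hence one witness suffices — is exactly the rigorous repair of that step. Your deletion of zero rows, the identification $\mathrm{rank}(\mathbf{D})=\dim\bigl(\sum_{n=1}^{N}\mathbf{G}_nV\bigr)$ with $V$ the span of the per-time difference vectors $\delta_t=(\delta_{1,t},\dots,\delta_{L,t})^T$, and the upper bound $\min(Nr,L)$ are all correct.

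However, the step you yourself flag as the crux is a genuine gap, and in fact an irreparable one: full support of $V$ does \emph{not} guarantee the existence of any realization attaining $\min(Nr,L)$, so the generalized Vandermonde determinant you plan to show non-vanishing can be identically zero. The obstruction you rule out ($V$ inside a proper coordinate subspace) is not the only one: $V$ may merely \emph{contain} vectors of small support. If $W=V\cap\mathbb{C}^S$ denotes the vectors of $V$ supported on a coordinate set $S$, then every diagonal $\mathbf{G}_n$ maps $W$ into $\mathbb{C}^S$, which forces the deterministic bound
\begin{align}
\mathrm{rank}(\mathbf{D})\;\le\;|S|+N\bigl(r-\dim(V\cap\mathbb{C}^S)\bigr)
\quad\text{for every } S\subseteq\{1,\dots,L\},
\end{align}
and this can undercut $\min(Nr,L)$. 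Concretely, take $N=2$, $L=4$, $T=2$, $\delta_1=(1,0,0,0)^T$, $\delta_2=(0,1,1,1)^T$: no antenna row of $\Delta$ is zero, $r=\mathrm{rank}(\Delta)=2$, so the lemma asserts $\mathrm{rank}(\mathbf{D})=\min(4,4)=4$ w.p.~$1$; yet the columns $\mathbf{G}_1\delta_1$ and $\mathbf{G}_2\delta_1$ are both multiples of $e_1$, so $\mathrm{rank}(\mathbf{D})\le 3$ for \emph{every} realization (here $S=\{1\}$ gives $1+2(2-1)=3$). Thus the witness your measure-zero argument requires does not exist, your incremental "each adjunction raises dimension maximally" induction breaks at the same point, and the statement as given is false without a Hall-type hypothesis such as $|S|+N\bigl(r-\dim(V\cap\mathbb{C}^S)\bigr)\ge\min(Nr,L)$ for all $S$ (satisfied, e.g., when every nonzero vector of $V$ has full support, as in the paper's examples). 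The paper's "it is not hard to verify" elides precisely this; your proposal makes the soft spot visible but does not, and cannot, close it as planned.
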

\begin{proof}[Proof of Lemma \ref{Lemma: Rank_of_D}]
See the appendix.
\end{proof}

With understanding the ranks of the matrices $\mathbf{E}_t$'s and the rank of the matrix $\mathbf{D}$, we introduce the following theorem of the new measure for the unitary query and the uniform query.

\begin{Theorem}\label{Th: New_Measure}
In asymptotic high SNR regimes, the PEP performances of space-time codes with the unitary query and the uniform query in the $M \times N \times L$ backscatter RFID channel given in \eqref{Eq: RFID_Channel_Model} can be measured by
\begin{align}
R_{unitary}=\sum_{t=1}^T\min(N,L_t^*),
\end{align}
and
\begin{align}
R_{uniform}=\min(N \times \textmd{rank}(\Delta),L),
\end{align}
respectively, where $L_{t}^*$ is the number of non-zero elements of the $t$-th column of the codewords difference matrix $\Delta$. In other words, if
\begin{align}
R_{unitary}>R_{uniform},
\end{align}
we have
\begin{align}
\lim_{\bar{\gamma}\rightarrow \infty}\label{Eq: Stronger_Measure}
\frac{\textmd{PEP}_{Z_X}(\bar{\gamma})}{\text{PEP}_{Z_Y}(\bar{\gamma})}\rightarrow 0;
\end{align}
if
\begin{align}
R_{unitary}<R_{uniform},
\end{align}
we have
\begin{align}
\lim_{\bar{\gamma}\rightarrow \infty}\label{Eq: Stronger_Measure}
\frac{\textmd{PEP}_{Z_Y}(\bar{\gamma})}{\text{PEP}_{Z_X}(\bar{\gamma})}\rightarrow 0;
\end{align}
and if
\begin{align}
R_{unitary}=R_{uniform},
\end{align}
we have
\begin{align}
\lim_{\bar{\gamma}\rightarrow \infty}\label{Eq: Stronger_Measure}
\frac{\textmd{PEP}_{Z_X}(\bar{\gamma})}{\text{PEP}_{Z_Y}(\bar{\gamma})}=c>0;
\end{align}
where $c$ is some positive constant.
\end{Theorem}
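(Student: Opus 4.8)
The plan is to reduce all three cases to a single sharp statement: for each scheme $W\in\{X,Y\}$ I will show that
\begin{align}
\lim_{\bar{\gamma}\to\infty}\bar{\gamma}^{\,R_W}\,\textmd{PEP}_{Z_W}(\bar{\gamma})=C_W
\end{align}
for some finite, strictly positive constant $C_W$, where $R_X=R_{unitary}$ and $R_Y=R_{uniform}$. Granting this, all three claims follow at once: since $\textmd{PEP}_{Z_X}/\textmd{PEP}_{Z_Y}\sim (C_X/C_Y)\,\bar{\gamma}^{-(R_{unitary}-R_{uniform})}$, the ratio tends to $0$ when $R_{unitary}>R_{uniform}$, its reciprocal tends to $0$ when $R_{unitary}<R_{uniform}$, and it tends to the positive constant $c=C_X/C_Y$ when $R_{unitary}=R_{uniform}$. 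Thus the whole theorem rests on establishing the polynomial decay order together with finiteness and positivity of the leading constant.

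To obtain the decay order I would condition on the backscatter channel $\mathbf{G}$ and exploit that, given $\mathbf{G}$, both $Z_X$ and $Z_Y$ are quadratic forms in the i.i.d.\ complex Gaussian entries of $\mathbf{X}$ and $\mathbf{Y}$. Because the rows $(x_{t,1},\dots,x_{t,L})$ are independent across $t$, the quantity $Z_X=\sum_{t=1}^{T}\|(x_{t,1},\dots,x_{t,L})\mathbf{E}_t\|_F^2$ is a sum of $T$ independent quadratic forms, the $t$-th of rank $\textmd{rank}(\mathbf{E}_t)=\min(N,L_t^*)$ by Lemma \ref{Lemma: Rank_of_Et}; each is a weighted sum of $\min(N,L_t^*)$ independent unit-mean exponentials, so its conditional density behaves like $z^{\min(N,L_t^*)-1}$ near the origin. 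Convolving the $T$ independent contributions, the conditional density of $Z_X$ behaves like $\kappa_X(\mathbf{G})\,z^{R_{unitary}-1}$ as $z\to 0^{+}$, with $R_{unitary}=\sum_t\min(N,L_t^*)$. In the same way $Z_Y=\|(y_1,\dots,y_L)\mathbf{D}\|_F^2$ is a single quadratic form of rank $\textmd{rank}(\mathbf{D})=\min(N\,\textmd{rank}(\Delta),L)$ by Lemma \ref{Lemma: Rank_of_D}, so its conditional density behaves like $\kappa_Y(\mathbf{G})\,z^{R_{uniform}-1}$. Feeding these near-origin densities into the standard integral $\int_0^\infty \mathbb{Q}(\sqrt{\bar{\gamma}z/2})\,z^{R-1}\,dz\sim \mathrm{const}\cdot\bar{\gamma}^{-R}$ gives the conditional asymptotics $\textmd{PEP}_{Z_W}(\bar{\gamma}\mid\mathbf{G})\sim K_W(\mathbf{G})\,\bar{\gamma}^{-R_W}$, where $K_W(\mathbf{G})$ is proportional to the reciprocal of the product of the nonzero eigenvalues of the relevant Gram matrix.

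The remaining and hardest step is to average over $\mathbf{G}$ and show the order $R_W$ survives. Since Lemmas \ref{Lemma: Rank_of_Et} and \ref{Lemma: Rank_of_D} make the relevant ranks constant with probability $1$, the exponent of $\bar{\gamma}$ is unchanged under averaging; the only danger is that $C_W=\mathbb{E}_{\mathbf{G}}[K_W(\mathbf{G})]$ diverges, which would force a smaller effective order or a logarithmic correction, as happens in keyhole channels. The main obstacle is therefore to prove that the inverse pseudo-determinants $\mathbb{E}_{\mathbf{G}}\big[\prod_t \det{}^{+}(\mathbf{E}_t\mathbf{E}_t^H)^{-1}\big]$ and $\mathbb{E}_{\mathbf{G}}\big[\det{}^{+}(\mathbf{D}\mathbf{D}^H)^{-1}\big]$ are finite, where $\det{}^{+}$ denotes the product of the nonzero eigenvalues. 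I would control these by bounding $K_W(\mathbf{G})$ through the smallest nonzero singular values of the Gaussian-built matrices and invoking finiteness of negative moments of (products of) such singular values in the non-degenerate dimension regime guaranteed by the lemmas, then use dominated convergence to interchange $\lim_{\bar{\gamma}\to\infty}\bar{\gamma}^{R_W}$ with $\mathbb{E}_{\mathbf{G}}$, yielding $0<C_W<\infty$.

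Finally, the three cases are not equally delicate, which I would use to streamline the argument. For the strict inequalities it suffices to have two-sided bounds $c_1\,\bar{\gamma}^{-R_W}\lesssim \textmd{PEP}_{Z_W}\lesssim c_2\,\bar{\gamma}^{-R_W}$ up to sub-polynomial factors, obtained from a Chernoff upper bound on $\mathbb{Q}(\cdot)$ together with a matching lower bound on the probability that $Z_W$ is small; a strict gap $R_{unitary}-R_{uniform}\ge 1$ then dominates any logarithmic correction, so these cases are robust even if such a factor were present. Only the equality case genuinely needs the sharp constant $C_W$ and the absence of a logarithmic factor, and that is exactly what the integrability argument above provides, delivering the positive limit $c=C_X/C_Y$.
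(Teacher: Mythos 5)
Your overall route coincides with the paper's own proof: condition on $\mathbf{G}$, use Lemmas \ref{Lemma: Rank_of_Et} and \ref{Lemma: Rank_of_D} to fix the ranks of $\mathbf{E}_t$ and $\mathbf{D}$ almost surely, deduce the conditional decay $\textmd{PEP}_{Z_W}(\bar{\gamma}\,|\,\mathbf{G})\asymp K_W(\mathbf{G})\,\bar{\gamma}^{-R_W}$ (the paper gets this from the SVD and the product form $\prod_i (1+\lambda_i\bar{\gamma}/4)^{-1}$, you from the $z^{R_W-1}$ behavior of the conditional density near the origin --- equivalent devices at this level), then interchange limit and expectation by dominated convergence and read all three cases off the ratio $\bar{\gamma}^{-(R_{unitary}-R_{uniform})}$. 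The single place where you genuinely depart from the paper is the integrability step, and that is exactly where your argument has a gap.

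The paper does not prove that $\mathbb{E}_{\mathbf{G}}\bigl[\prod_{t}\prod_{i}\lambda_{t,i}^{-1}\bigr]$ and $\mathbb{E}_{\mathbf{G}}\bigl[\prod_{i}(\lambda_i^*)^{-1}\bigr]$ are finite; it introduces them as explicit standing assumptions, namely \eqref{Eq: FiniteExpectation1} and \eqref{Eq: FiniteExpectation2}. You instead claim to derive them from ``finiteness of negative moments of singular values in the non-degenerate dimension regime guaranteed by the lemmas.'' But the lemmas guarantee only ranks, not any dimension gap, and the claimed finiteness is false in general: for a scalar complex Gaussian $g$ one has $\mathbb{E}[1/\|g\|^2]=\infty$, and square Wishart-type factors likewise have infinite inverse-determinant expectations. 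Concretely, take $N=L=1$ with all columns of $\Delta$ nonzero: then $\mathbf{E}_t=\delta_t\, g_{1,1}$, so $\prod_t \lambda_{t}^{-1}\propto \|g_{1,1}\|^{-2T}$ has infinite mean, and a direct computation of $\mathbb{E}_{\mathbf{G}}\bigl[\prod_t(1+c_t\|g_{1,1}\|^2\bar{\gamma})^{-1}\bigr]$ shows the unitary-query PEP decays like $\bar{\gamma}^{-1}$ (up to logarithms), not like $\bar{\gamma}^{-R_{unitary}}=\bar{\gamma}^{-T}$. This also defeats your fallback for the strict-inequality cases: when integrability fails, the correction to the nominal order is \emph{polynomial}, not sub-polynomial, so the observation that ``a rank gap of at least $1$ dominates any logarithmic correction'' does not rescue Cases 1 and 2. (In this particular degenerate example the conclusion of Theorem \ref{Th: New_Measure} happens to survive, since the uniform query suffers a keyhole-type $\log\bar{\gamma}/\bar{\gamma}$ law and the ratio still tends to $0$, but not by your argument or the paper's.) To reach the paper's level of rigor you must either posit \eqref{Eq: FiniteExpectation1}--\eqref{Eq: FiniteExpectation2} outright, as the paper does, or restrict to antenna/code configurations in which an actual dimension gap makes the required negative moments finite; proving them in full generality is impossible because they are simply not true.
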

\begin{proof}[Proof of Theorem \ref{Th: New_Measure}]
See the appendix.
\end{proof}

The new measure in Theorem \ref{Th: New_Measure} can be used to compare the PEP performances of the unitary query and the uniform query in \emph{large scale} (i.e., if one measure is large than the other, its performance will be much better than that of the other). Therefore in some sense the new measure is similar to the diversity order, but not exactly the same. We give a brief discussion of three possible cases.\\
\emph{Case 1}:$R_{\textmd{unitary}}>R_{\textmd{uniform}}$\\
In this case, the performance of the unitary query will be much better than that of the uniform query. Most well designed space-time codes fall into this case, and can drive the full potential of the $M \times L \times N$ backscatter RFID channel.\\
\emph{Case 2}:$R_{\textmd{unitary}}<R_{\textmd{uniform}}$ \\
In this case, the performance of the uniform query will be much better than that of the uniform query. However, only rare space-time codes fall into this case. Such space-time codes cannot drive the full potential of the $M \times L \times N$ backscatter RFID channel and thus are not preferred in the $M \times L \times N$ channel.\\
\emph{Case 3}:$R_{\textmd{unitary}}=R_{\textmd{uniform}}$ \\
In this case, the performance of the unitary query will be similar as that of the uniform query, while the unitary query will still outperform the uniform query, though the improvement will not be significant.

Note that in the above three cases, \emph{Case 1} can achieve the full potential of the $M \times L \times N$ channel, and is usually preferred.

\section{Performance Evaluations}\label{Sec: Examples and Simulations}
In this section, we give a few examples and provide corresponding simulation results for the proposed unitary query and the conventional uniform query. We will see by how much the unitary query can improve the performance and how the unitary query transfers the complexity requirement from the tag end to the reader end for high performance systems. In the following simulations, we use the same channel model as in previous real measurements \cite{Kim2003} \cite{Griffin2009} and analytical studies \cite{Griffin2008, He2012, He2013, Boyer2013} of the $M \times N \times L$ backscatter RFID channel. More specifically, the entries of both $\mathbf{H}$ in \eqref{Eq: MatrixH} and those of $\mathbf{G}$ in \eqref{Eq: MatrixG} follow i.i.d complex Gaussian distribution, with zero mean and unit variance, and the fading is quasi-static. Given a codewords difference matrix $\Delta$, the new performance measure ($R_{uniform}$) for the conventional uniform query  given in Theorem \ref{Th: New_Measure} is based on the rank of the random matrix $\mathbf{D}$ defined in \eqref{Eq: MatrixD}, and the new performance measure ($R_{unitary}$) for the proposed unitary query is based on the ranks of the random matrices $\mathbf{E}_t$'s defined in \eqref{Eq: MatrixEt}.

\subsection{Tag with Multiple Antennas}
When the uniform query is employed, the limit of the performance is given by
\begin{align}\label{Eq: Ex1Uniform}
R_{uniform}=\min(N\times rank(\Delta), L)\leq L,
\end{align}
which means no matter how many antennas are equipped in the channel and whatever the space-time code is, the performance  has a bottleneck determined by $L$.
However, the unitary query can break through this bottleneck and bring a significant improvement:
\begin{align}\label{Eq: Ex1Unitary}
R_{unitary}=\sum_{t=1}^T\min(N,L_t^*)\le TL.
\end{align}
With some space-time codes, the above measure $R_{unitary}$ can achieve $TL$.
We give the following example to illustrate this and show how much gain the unitary query can bring.\\

\emph{\textbf{Example 1}}
Consider the $2 \times 2 \times 2$ backscatter RFID channel, i.e.
\begin{align}
\mathbf{H}=\left(
         \begin{array}{cc}
           h_{1,1}   & h_{1,2} \\
           h_{2,1}   & h_{2,2} \\
         \end{array}
       \right),
\mathbf{G}=\left(
         \begin{array}{cc}
           g_{1,1}   & g_{1,2} \\
           g_{2,1}   & g_{2,2} \\
         \end{array}
       \right),
\end{align}
where the entries of $\mathbf{H}$ and $\mathbf{G}$ are i.i.d. complex Gaussian with zero mean and unit variance,
and the following codewords difference matrix resulted from the space-time code employed at the tag end:
\begin{align}\label{Eq: CodeWordDifferenceMatrix}
\Delta=\left(
         \begin{array}{cc}
           1   & -2 \\
           1.5 & 2.5 \\
         \end{array}
       \right).
\end{align}
In this case $M=2$, $L=2$, $N=2$, $T=2$, $rank(\Delta)=2$, and $L_1^*=L_2^*=2$.
Based on Theorem \ref{Th: New_Measure}, when the unitary query is employed we have
\begin{align}
R_{unitary}&=\sum_{t=1}^T \min(N,L_t^*) \nonumber\\
           &=\min(2,2)+\min(2,2)=4,
\end{align}
and when the uniform query is employed we have
\begin{align}
R_{uniform}&=\min(N \times rank(\Delta),L) \nonumber\\
           &=\min(2 \times 2, 2)=2.
\end{align}
Therefore the performance of the unitary query is expected to be much better than that of the uniform query. Simulations confirm this as we can see in Fig. \ref{Fig: UnitaryVsUniformN2}: there is a significant gain by employing the unitary query for the $2 \times 2 \times 2$ backscatter channel. We observe a $5$ to $7$ dB gain in the SNR regimes of $10$ to $15$ dB when the system employs unitary query, and the gain increases as the SNR increases. This gain brought by the unitary query can be considered as the time diversity gain that has been illustrated in Section \ref{Sec: Interpretation}.\\

\emph{\textbf{Example 2}}
We consider a case that the tag end employs the same space-time code as that in \emph{Example 1} but a different antenna setting: a $2 \times 2 \times 1$ backscatter RFID channel, i.e.
\begin{align}
\mathbf{H}=\left(
         \begin{array}{cc}
           h_{1,1}   & h_{1,2} \\
           h_{2,1}   & h_{2,2} \\
         \end{array}
       \right),
\mathbf{G}=\left(
         \begin{array}{c}
           g_{1,1}\\
           g_{2,1}\\
         \end{array}
       \right).
\end{align}
In this case $M=2$, $L=2$, $N=1$, $T=2$, $rank(\Delta)=2$, $L_1^*=L_2^*=2$, and the measures are given by
\begin{align}
R_{unitary}&=\sum_{t=1}^T \min(N,L_t^*)\nonumber\\
           &=\min(1,2)+\min(1,2)=2,
\end{align}
and
\begin{align}
R_{uniform}&=\min(N \times rank(\Delta), L)\nonumber\\
           &=\min(1 \times 2, 2)=2.
\end{align}
Since $R_{unitary}=R_{uniform}$, by Theorem \ref{Th: New_Measure} the unitary query still outperforms the uniform query but the improvement is not significant, as shown in Fig. \ref{Fig: UnitaryVsUniformN1}.
In this case, with the given code difference matrix in (\ref{Eq: CodeWordDifferenceMatrix}), the $2 \times 2 \times 1$ channel achieves the full potential for the uniform query but does not achieve the full potential for the unitary query, that is reason why the unitary query outperforms the uniform query but the gain is not significant.

%

\subsection{Tag with Single Antenna}
In practice, since equipping multiple antennas on the tag increases the complexity and even the size of the tag, single-antenna tags are always preferred. However, with the conventional uniform query, the performance of the single-antenna tag ($L=1$) is quite limited. As we can see that when $L=1$, the performance measure for the conventional query uniform is
\begin{align}
R_{uniform}&=\min(N\times rank (\Delta), 1)=1.
\end{align}
It means that, when the conventional uniform query is employed, significant performance improvement can never be made for single-antenna tags. However, for the unitary query, the measure is given by
\begin{align}
R_{unitary}=\sum_{t=1}^T \min(N, L_t^*)\leq \sum_{t=1}^T \min(N, 1)=T.
\end{align}
Clearly when $L_t^*=L=1$ for all $t$, $R_{unitary}$ achieves $T$. Therefore, with the unitary query, carefully choosing coding scheme can lead to significant improvements for single-antenna tags. We use the following example to illustrate this.\\

\emph{\textbf{Example 3}}
We consider the BPSK with repetition code of order of $2$, and the $2 \times 1 \times 2$ backscatter RFID channel, i.e.
\begin{align}
\Delta=\left(
         \begin{array}{c}
           2  \\
           2  \\
         \end{array}
       \right),
\end{align}
and
\begin{align}
\mathbf{H}=\left(
         \begin{array}{c}
           h_{1,1}  \\
           h_{2,1}  \\
         \end{array}
       \right),
\mathbf{G}=\left(
         \begin{array}{cc}
           g_{1,1}   & g_{1,2} \\
         \end{array}
       \right).
\end{align}
In this case $M=2$, $L=1$, $N=2$,  $T=2$, $rank(\Delta)=1$, and $L_1^*=L_2^*=1$. We have
\begin{align}
R_{unitary}&=\sum_{t=1}^T \min(N,L_t^*)\nonumber\\
           &=\min(2,1)+\min(2,1)=2,
\end{align}
and
\begin{align}
R_{uniform}&=\min(N \times rank(\Delta), L)\nonumber\\
           &=\min(2 \times 1, 1)=1.
\end{align}
Thus we expect that the unitary query can bring significant gain. The simulation results of this example are shown in Fig. \ref{Fig: UnitaryVsUniformN2RepetionCodeBPSK}. We observe that, at mid SNR regimes, the unitary query brings about $10$ dB gain, and this gain is larger in higher SNR regimes. With the conventional uniform query, this level of improvement is only achievable when multiple antennas are equipped on the tag. In other words, for high performance RFID systems, the proposed unitary query can transfer the complexity requirement from the tag end to the reader end.



\begin{figure}
\centering
  \includegraphics[scale=0.66]{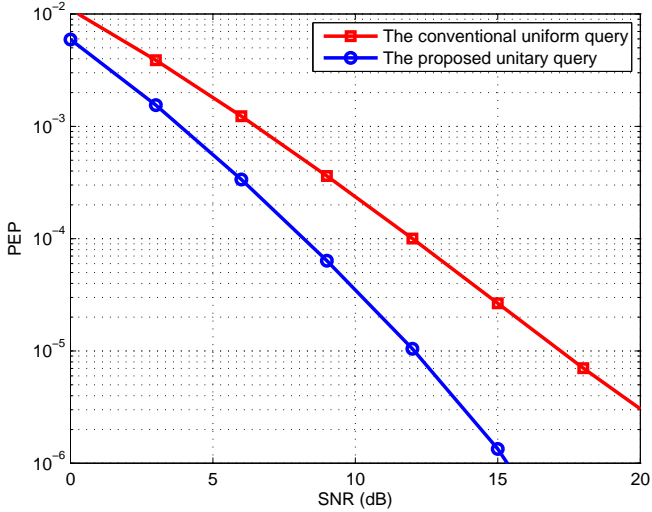}\\
  \caption{\emph{\textbf{Example 1}}: Performance comparisons between the unitary query and the uniform query. In this example, the space-time code that has a code difference matrix defined in \eqref{Eq: CodeWordDifferenceMatrix} is employed at the tag end. We can see that when the proposed unitary query is employed at the reader query end, the performance is much better than the case when the conventional unitary is employed. In the $2 \times 2 \times 2$ backscatter RFID channel, the unitary query can bring a significant gain: 5-7 dB improvement in the 10-20 dB SNR regime, this improvement can be even larger than 10 dB in higher SNR regimes. The simulations in this example agrees the performance measure in Theorem 1.}\label{Fig: UnitaryVsUniformN2}
\end{figure}

\begin{figure}
\centering
  \includegraphics[scale=0.66]{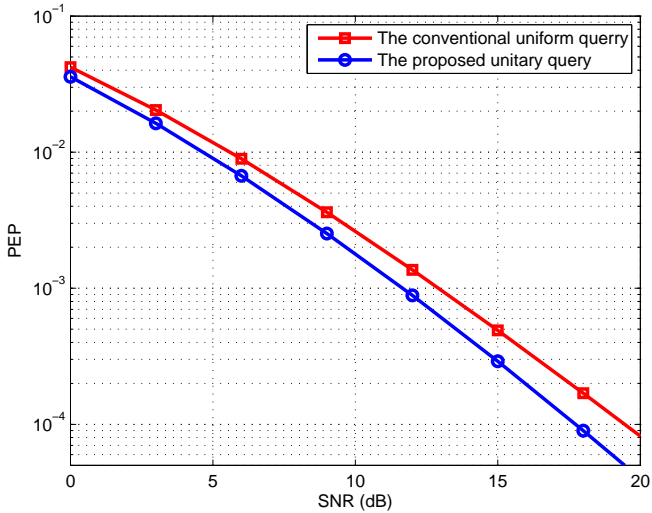}\\
  \caption{\emph{\textbf{Example 2}}: Performance comparisons between the unitary query and the uniform query. In this example, the space-time code which has a code difference matrix defined in \eqref{Eq: CodeWordDifferenceMatrix} is employed at the tag end. In the $2 \times 2 \times 1$ backscatter RFID channel, the unitary query outperforms the uniform query while the gain is not significant in this case. The simulations in this example agrees the performance measure in Theorem 1.}\label{Fig: UnitaryVsUniformN1}
\end{figure}

\begin{figure}
\centering
  \includegraphics[scale=0.66]{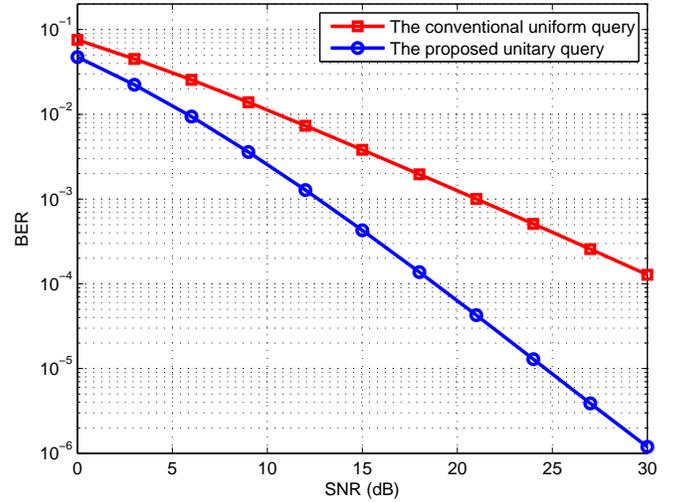}\\
  \caption{\emph{\textbf{Example 3}}: The unitary query can also significantly improve the performance of single-antenna tag. About $10$ dB gain brought by the unitary query is observed in the $2 \times 1 \times 2$ backscatter RFID channel in the mid SNR regimes when the tag employs repetition code. With the conventional uniform query, this level of improvement is only achievable when multiple antennas are equipped on the tag. }\label{Fig: UnitaryVsUniformN2RepetionCodeBPSK}
\end{figure}


\section{Conclusion}\label{Sec: Conclusion}
In this paper, we proposed the unitary query at the reader query end in the $M \times L \times N$ MIMO backscatter RFID channel. We showed that even in the quasi-static fading, the unitary query can provide time diversity via multiple reader query antennas and thus can improve the performance of the RFID channel significantly. Due to the difficulty of evaluating the PEP and the diversity order directly, we derived a new performance measure based on the ranks of certain carefully constructed matrices.
Simulations showed that the proposed unitary query can improve the performance by $5$ to $10$ dB in mid-range SNR regimes, and the gain increases as the SNR increases. The unitary query can also improve the performance for the case of having single-antenna tag significantly, making it possible to employ inexpensive, small and low complex tags for high performance. In other words, for high performance RFID systems, the proposed unitary query can transfer the complexity requirement from the tag end to the reader end.


%

\section{Appendix}
\begin{proof}[\textbf{Proof of Lemma \ref{Lemma: Rank_of_Et}}]
Let $\mathbf{g}_1, \cdots, \mathbf{g}_N$ denote the columns of $\mathbf{G}$. We consider a set of scalars $\{a_1,\cdots,a_N\}$ where $a_n \in \mathbb{C}$,
for any linear combination of the set of vectors, $\{\mathbf{g}_1,\cdots,\mathbf{g}_N\}$,
\begin{align}
\mathbf{b}=\sum_{n=1}^L a_n\mathbf{g}_n
\end{align}
is a zero-mean complex Gaussian random vector with covariance matrix $\sum_{n=1}^L\|a_{n}\|^2 \mathbf{I}$,
therefore
\begin{align}\label{Eq: Linear_Combine_Columns_G}
\mathbb{P}\left(\mathbf{b}=\mathbf{0}\right)=0.
\end{align}
When $N\le L$, \eqref{Eq: Linear_Combine_Columns_G} implies that
\begin{align}
\mathbb{P}\left(\textmd{rank}(\mathbf{G})<N\right)=0,
\end{align}
or
\begin{align}
\mathbb{P}\left(\textmd{rank}(\mathbf{G})=N\right)=1.
\end{align}
 When $N > L$, by performing a linear combination of the rows of $\mathbf{G}$ and following a procedure similar to the case that $N\le L$, we can obtain
\begin{align}
\mathbb{P}\left(\textmd{rank}(\mathbf{G})=L\right)=1.
\end{align}
Hence the matrix $\mathbf{G}$ is of full rank with probability $1$, i.e.
\begin{align}
\mathbb{P}\left(\textmd{rank}(\mathbf{G})=\min(N,L)\right)=1.
\end{align}
Now notice that $\Delta_t$ is diagonal, therefore $\mathbf{E}_t=\Delta_t\mathbf{G}$ has $L_t^*$ non-zero rows.
Because $\mathbf{G}$ is full rank w.p. $1$,
we have
\begin{align}
rank(\mathbf{E}_t)=\min(L_t^*,N)
\end{align}
w.p. $1$.
\end{proof}
\begin{proof}[\textbf{Proof of Lemma \ref{Lemma: Rank_of_D}}]
Following similar steps to prove that $\mathbf{G}$ is of full rank w.p. $1$, we can show that
\begin{align}
\mathbb{P}(\textmd{rank}(\mathbf{G}_n)=L)=1,
\end{align}
i.e., $\mathbf{G}_n$ is also of full rank w.p. $1$.
Since
\begin{align}
\mathbf{D}_n=\Delta\mathbf{G}_{n},
\end{align}
we have
\begin{align}\label{Eq: Rank_D_n}
\mathbb{P}(\textmd{rank}(\Delta\mathbf{G}_n)=\textmd{rank}(\Delta))=1,
\end{align}
i.e. the rank of $\mathbf{D}_n$ is the same as the rank of $\Delta$ w.p. $1$.

Now let us consider the following two cases:\\
\emph{Case 1: $N \times rank(\Delta) \le L$}\\
By Eqn. \eqref{Eq: Rank_D_n}, clearly the columns of each of $\mathbf{D}_n$'s span a subspace of dimension $\textmd{rank}(\Delta)$ in $\mathbb{C}^{L}$ w.p. $1$.
Now consider a set of scalars $a_{i,j}$'s, where  $i \in \{1,\cdots,N\}$, and $j \in\{1,\cdots, T\}$.
If for $i \in \{2,\cdots,N\}$ and $j \in\{1,\cdots, T\}$, $a_{i,j}$'s are not all zero, it is not hard to verify that
\begin{align}
\mathbb{P}\left(\sum_{j=1}^T a_{1,j} \mathbf{D}_{1,j}=\sum_{i=2}^N\sum_{j=1}^T a_{i,j} \mathbf{D}_{i,j}\right)=0.
\end{align}
This implies that the rows of all $\mathbf{D}_n$'s span a subspace of dimension $N \times \textmd{rank}(\Delta)$ in $\mathbb{C}^{L}$ w.p. $1$, i.e. the rank of the block matrix $\mathbf{D}$ is $N \times \textmd{rank}(\Delta)$ w.p. 1 in this case.
\\
\emph{Case 2: $N \times rank(\Delta) > L$}\\
Following the similar procedure as in Case 1, we can find that the dimension of the subspace spanned by the columns of all $\mathbf{D}_n$'s is $L$, i.e. the rank of the block matrix $\mathbf{D}$ is $L$ w.p. 1 in this case.\\
With the results from Cases 1 and 2, we have Lemma \ref{Lemma: Rank_of_D} hold.
\end{proof}

\begin{proof}[\textbf{Proof of Theorem \ref{Th: New_Measure}}]
We consider singular value decompositions of $\mathbf{E}_t$'s and $\mathbf{D}$, i.e.,
\begin{align}
\mathbf{E}_t=\mathbf{U}_t\Lambda_t\mathbf{V}_t,
\end{align}
and
\begin{align}
\mathbf{D}=\mathbf{U}^*\Lambda^*\mathbf{V}^*.
\end{align}
Note that, for the unitary query, for a realization of $\mathbf{G}$ the squared distance between codewords can be given as
\begin{align}
Z_X|\mathbf{G}&=\sum_{t=1}^T \|(x_{t,1},\cdots,x_{t,L})   \mathbf{E}_t \|_F^2 \nonumber\\
 &=\sum_{t=1}^T \|(x_{t,1},\cdots,x_{t,L})   \mathbf{U}_t\Lambda_t\mathbf{V}_t \|_F^2 \nonumber\\
 &\sim \sum_{t=1}^T \|(x_{t,1},\cdots,x_{t,L})   \Lambda_t \|_F^2 \nonumber\\
 &=\sum_{t=1}^T \sum_{i=1}^{\textmd{rank}(\mathbf{E}_t)} \lambda_{t,i}\|x_{t,i}\|^2,
\end{align}
where $\lambda_{t,i}$'s ($i=1,\cdots,\textmd{rank}(\mathbf{E}_t)$) are the non-zero eigenvalues of $\mathbf{E}_t$.
Given a realization of $\mathbf{G}$, the conditional PEP on $\mathbf{G}$ is given by
\begin{align}\label{Eq: PEP_Z_G}
\text{PEP}_{Z_X|\mathbf{G}}(\bar{\gamma})
&=\mathbb{E}_{Z_X|\mathbf{G}}
\left(\mathbb{Q}\left(\sqrt{\bar{\gamma} \sum_{t=1}^T \sum_{i=1}^{rank(\mathbf{E}_t)} \lambda_{t,i}\|x_{t,i}\|^2/2}\right)\right) \nonumber\\
&=\prod_{t=1}^T\prod_{i=1}^{\textmd{rank}(\mathbf{E}_t)}\frac{1}{1+\lambda_{t,i}\bar{\gamma}/4}
\end{align}
Therefore the PEP for the unitary query can be obtained as
\begin{align}\label{Eq: PEP_Z_G}
\text{PEP}_{Z_X}(\bar{\gamma})
&=\mathbb{E}_{\mathbf{G}}\left(\text{PEP}_{Z_X|\mathbf{G}}(\bar{\gamma})\right)\nonumber\\
&=\mathbb{E}_{\mathbf{G}}\left(\prod_{t=1}^T\prod_{i=1}^{\textmd{rank}(\mathbf{E}_t)}\frac{1}{1+\lambda_{t,i}\bar{\gamma}/4}\right)\nonumber\\
&=\mathbb{E}_{\mathbf{G}}\left(\prod_{t=1}^T\prod_{i=1}^{\min(N,L_t^*)}\frac{1}{1+\lambda_{t,i}\bar{\gamma}/4}\right)
\end{align}
The last step of the above derivation is obtained by using the result from Lemma \ref{Lemma: Rank_of_Et} and the fact that $0<\frac{1}{1+\lambda_{t,i}\bar{\gamma}/4}<\infty$.

Similarly, for the uniform query, for a realization of $\mathbf{G}$, the squared distance between codewords can be given by
\begin{align}
Z_Y|\mathbf{G} & = \|(y_{1},\cdots,y_{L}) \mathbf{D}\|_F^2 \nonumber\\
    & = \|(y_{1},\cdots,y_{L}) \mathbf{U}^*\Lambda^*\mathbf{V}^*\|_F^2  \nonumber\\
    & \sim \|(y_{1},\cdots,y_{L}) \Lambda^*\|_F^2  \nonumber\\
    & = \sum_{i=1}^{rank(\mathbf{D})}\lambda_{i}^*\|(y_{1,i})\|^2,
\end{align}
where $\lambda_{i}^*$'s are the eigenvalues of $\mathbf{D}$.
For a realization of $\mathbf{G}$, the conditional PEP is given by
\begin{align}\label{Eq: PEP_Zstar_G}
\text{PEP}_{Z_Y|\mathbf{G}}(\bar{\gamma})
&=\mathbb{E}_{Z_Y|\mathbf{G}}\left(\mathbb{Q}\left(\sqrt{\bar{\gamma}\sum_{i=1}^{\textmd{rank}(\mathbf{D})} \lambda_{i}^*\|x_{i}\|^2/2}\right)\right)\nonumber\\
&=\prod_{i=1}^{\textmd{rank}(\mathbf{D})}\frac{1}{1+\lambda_{i}^*\bar{\gamma}/4}
\end{align}
Therefore the PEP for the uniform query is given by
\begin{align}\label{Eq: PEP_Zstar_G}
\text{PEP}_{\mathbf{G}}(\bar{\gamma})
&=\mathbb{E}_{\mathbf{G}}\left(\prod_{i=1}^{\textmd{rank}(\mathbf{D})}\frac{1}{1+\lambda_{i}^*\bar{\gamma}/4}\right)\nonumber\\
&=\mathbb{E}_{\mathbf{G}} \left(\prod_{i=1}^{\min(N \times \textmd{rank}(\Delta),L)}\frac{1}{1+\lambda_{i}^*\bar{\gamma}/4}\right).
\end{align}
The last step of the above derivation is obtained by using  the result from Lemma \ref{Lemma: Rank_of_D} and the fact that $0<\frac{1}{1+\lambda_{i}^*\bar{\gamma}/4}<\infty$.
With the assumption that
\begin{align}\label{Eq: FiniteExpectation1}
\mathbb{E}_{\mathbf{G}}\left(\prod_{t=1}^T\prod_{i=1}^{\min(N,L_t^*)}\frac{1}{\lambda_{i,t}}\right)<\infty,
\end{align}
\begin{align}\label{Eq: FiniteExpectation2}
\mathbb{E}_{\mathbf{G}}\left(\prod_{i=1}^{R_{\textmd{uniform}}}\frac{1}{\lambda_{i}^*}\right)<\infty,
\end{align}
and applying the Dominated Convergence Theorem (DCT), we have
\begin{align}\label{Eq: DCT_Unitary}
\lim_{\bar{\gamma}\rightarrow \infty}
\left(\bar{\gamma}^{R_{\textmd{unitary}}}\times\textmd{PEP}_{Z_X}(\bar{\gamma})\right)
=\mathbb{E}_{\mathbf{G}}\left(\prod_{t=1}^T\prod_{i=1}^{\min(N,L_t^*)}\frac{1}{\lambda_{i,t}}\right),
\end{align}
and
\begin{align}\label{Eq: DCT_Uniform}
\lim_{\bar{\gamma}\rightarrow \infty}
\left(\bar{\gamma}^{R_{\textmd{uniform}}}\times\textmd{PEP}_{Z_Y}(\bar{\gamma})\right)
=\mathbb{E}_{\mathbf{G}}\left(\prod_{i=1}^{R_{\textmd{uniform}}}\frac{1}{\lambda_{i}^*}\right).
\end{align}
\emph{Case 1}:$R_{\textmd{unitary}}>R_{\textmd{uniform}}$\\
In this case,
\begin{align}
\lim_{\bar{\gamma}\rightarrow \infty}
\frac{\textmd{PEP}_{Z_X}(\bar{\gamma})}{\text{PEP}_{Z_Y}(\bar{\gamma})}
&=\lim_{\bar{\gamma}\rightarrow \infty}\frac{\bar{\gamma}^{R_{\textmd{uniform}}}\mathbb{E}_{\mathbf{G}}\left(\prod_{t=1}^T\prod_{i=1}^{\min(N,L_t^*)}\frac{1}{\lambda_{i,t}}\right)}
{\bar{\gamma}^{R_{\textmd{unitary}}}\mathbb{E}_{\mathbf{G}}\left(\prod_{i=1}^{R_{\textmd{uniform}}}\frac{1}{\lambda_{i}^*}\right)}\nonumber\\
&\rightarrow 0.
\end{align}
\emph{Case 2}:$R_{\textmd{unitary}}<R_{\textmd{uniform}}$\\
In this case,
\begin{align}
\lim_{\bar{\gamma}\rightarrow \infty}
\frac{\textmd{PEP}_{Z_Y}(\bar{\gamma})}{\text{PEP}_{Z_X}(\bar{\gamma})}
&=\lim_{\bar{\gamma}\rightarrow \infty}\frac{\bar{\gamma}^{R_{\textmd{unitary}}}\mathbb{E}_{\mathbf{G}}\left(\prod_{i=1}^{R_{\textmd{uniform}}}\frac{1}{\lambda_{i}^*}\right)}
{\bar{\gamma}^{R_{\textmd{uniform}}}\mathbb{E}_{\mathbf{G}}\left(\prod_{t=1}^T\prod_{i=1}^{\min(N,L_t^*)}\frac{1}{\lambda_{i,t}}\right)}
\nonumber\\
&\rightarrow 0.
\end{align}
\emph{Case 3}:
$R_{\textmd{unitary}}=R_{\textmd{uniform}}$\\
In this case, we have
\begin{align}
\lim_{\bar{\gamma}\rightarrow \infty}
\frac{\textmd{PEP}_{Z_X}(\bar{\gamma})}{\text{PEP}_{Z_Y}(\bar{\gamma})}
&=\lim_{\bar{\gamma}\rightarrow \infty}\frac{\bar{\gamma}^{R_{\textmd{uniform}}}\mathbb{E}_{\mathbf{G}}\left(\prod_{t=1}^T\prod_{i=1}^{\min(N,L_t^*)}\frac{1}{\lambda_{i,t}}\right)}
{\bar{\gamma}^{R_{\textmd{unitary}}}\mathbb{E}_{\mathbf{G}}\left(\prod_{i=1}^{R_{\textmd{uniform}}}\frac{1}{\lambda_{i}^*}\right)}\nonumber\\
&=\frac{\mathbb{E}_{\mathbf{G}}\left(\prod_{t=1}^T\prod_{i=1}^{\min(N,L_t^*)}\frac{1}{\lambda_{i,t}}\right)}
{\mathbb{E}_{\mathbf{G}}\left(\prod_{i=1}^{R_{\textmd{uniform}}}\frac{1}{\lambda_{i}^*}\right)}
=c.
\end{align}
\end{proof}

\ifCLASSOPTIONcaptionsoff
  \newpage
\fi



%

\bibliographystyle{IEEEtran}
\bibliography{RFIDarticle2}

\end{document}